\def\block(#1,#2)#3{\multicolumn{#2}{c}{\multirow{#1}{*}{$ #3 $}}}
\newtheorem{thm}{Theorem}
\newtheorem{rmrk}{Remark}
\newtheorem{lem}{Lemma}
\newtheorem{defn}{Definition}
\begin{document}
\title{Product Matrix Minimum Storage Regenerating Codes with Flexible Number of Helpers}

\author{\IEEEauthorblockN{Kaveh Mahdaviani}
\IEEEauthorblockA{ECE Dept., University of Toronto\\
Toronto, ON M5S3G4, Canada\\
Email: kaveh@comm.utoronto.ca}
\and
\IEEEauthorblockN{Soheil Mohajer}
\IEEEauthorblockA{ECE Dept., University of Minnesota\\
Minneapolis, MN 55404, USA\\
Email: soheil@umn.edu}\thanks{The work of S. Mohajer is supported by the National Science Foundation under Grant CCF-1617884.}
\and
\IEEEauthorblockN{Ashish Khisti}
\IEEEauthorblockA{ECE Dept., University of Toronto\\
Toronto, ON M5S3G4, Canada\\
Email: akhisti@comm.utoronto.ca}}

\maketitle
\vspace{-3.5cm}

\begin{abstract}

In coding for distributed storage systems, efficient data reconstruction and repair through accessing a predefined number of arbitrarily chosen storage nodes is guaranteed by regenerating codes. Traditionally, code parameters, specially the number of helper nodes participating in a repair process, are predetermined. However, depending on the state of the system and network traffic, it is desirable to adapt such parameters accordingly in order to minimize the cost of repair. In this work a class of regenerating codes with minimum storage is introduced that can simultaneously operate at the optimal repair bandwidth, for a wide range of exact repair mechanisms, based on different number of helper nodes.

\end{abstract}

\IEEEpeerreviewmaketitle

\section{Introduction}

For a distributed storage system (DSS) reliability and accessibility are the most important features. With large scale DSSs nowadays it is common to lose access to a storage node or part of its content. Hence, both reliability and accessibility depend on system's capability to replace a failed node by a new one, and recover its content. This procedure is referred to as \emph{repair}. For the DSS to be capable of repair, it is necessary to store redundancy. It is shown that there exits an information theoretic tradeoff between the amount of redundancy (\emph{i.e.} storage overhead), and the amount of data transmission required for a repair, referred to as \emph{repair bandwidth} \cite{Regenerating}. 

Among various methods of storing redundancy and performing repair, a specific class of erasure codes, named regenerating codes, offers the efficient performance \cite{Regenerating}. More precisely, a regenerating code on a Galois field $\mathbb{F}_{q}$ for a DSS with $n$ storage nodes, maps the source data of size $F$ symbols into $n$ pieces of size $\alpha$ symbols each, and stores them in $n$ separate nodes, such that any $k$ out of $n$ nodes suffice to recover the data. Such system is capable to tolerate up to $(n-k)$ node failures. Moreover, upon failure of one node, it can be replaced by a new node whose content is determined by connecting to an arbitrary set of $d$ (where $d\geq k$) helper nodes, and downloading $\beta$ symbols form each (where $\beta \leq \alpha$). Ideally, one would like to minimize the storage overhead, and repair bandwidth simultaneously. It turns out that for a given file size $F$, there is a tradeoff between the per-node storage capacity $\alpha$ and the repair bandwidth $\gamma=d\beta$, and one can be minimized only at the cost of a penalty for the other \cite{Regenerating}. In  particular, at one extreme point of this tradeoff, one could first minimize the \emph{per-node storage}, $\alpha$, and then minimize the \emph{per-node repair bandwidth}, $\beta$, to obtain a \emph{minimum storage regenerating} (MSR) code. As a result, MSR codes have the \emph{maximum distance separable} (MDS) property, and also minimize the repair bandwidth for the given $\alpha$ \cite{Regenerating}, which means for an MSR code we have $F=k\alpha$, and 
\begin{align}\label{eq_betaMSR}
\beta = \frac{F}{k(d-k+1)}.
\end{align}
In other words, in MSR codes the repair bandwidth is decreasing super-linearly as $d$ grows. 

Reversing the order of minimization between $\alpha$, and $\beta$ results in another extreme point of the tradeoff, which provides the minimum repair bandwidth (MBR) regenerating codes. Our focus in this work is on MSR codes as they minimize the storage cost. Moreover, we only consider the repair mechanisms in which the replacement node contains exactly the same content as stored in the failed node. Such repair mechanisms are referred to as \emph{exact repair}, and enable the code to be systematic, which is a significant advantage in practice.




The common adopted model in regenerating codes considers a predetermined number $d$ (where $k\leq d \leq n-1$) of helpers required for any repair procedure. Each of these helpers is also assumed to provide $\beta = \gamma/d$ repair bandwidth. This sets a threshold for the system's capability to perform repair. On the other hand, in practice the state of system dynamically changes as a function of various factors including traffic load, available bandwidth, \emph{etc.} Therefore, runtime adaptation would be of great value towards optimizing the performance. For instance, when the system is heavily loaded by many read requests, there might be only very few nodes available to serve as helpers. In this situation we are interested in optimal repair based on the available helpers. Likewise, when there are many helpers available it is beneficial to use a large number of helpers as increasing $d$ reduces both $\gamma$ and $\beta$ in optimal repair mechanism characterized by equation (\ref{eq_betaMSR}). This could then reduce both the total network traffic as well as the transmission delay. We refer to such property as \emph{bandwidth adaptive}. 

The design of such codes has been of interest and the significance of bandwidth adaptivity in the performance of the system has been emphasised in \cite{BWAdapt_Kermarrec ,Exact_AI_Asymptotic, BWAdapt_Opportunistic, ProgressiveEngagement}. However, it is a challenging problem to design such coding scheme with a large flexibility degree since it needs to satisfy many optimality conditions simultaneously. As a result, this problem has only been considered for the MSR \cite{Exact_AI_Asymptotic, ExplicitMSR}, and MBR \cite{BAER_ISIT16, BAER} extreme points of the tradeoff. For the MBR case, \cite{BAER_ISIT16, BAER} provided a solution for a wide range of practical parameters based on the \emph{Product Matrix} framework introduced in \cite{PM_Codes}. In \cite{Exact_AI_Asymptotic} a solution is provided based on interference alignment, which only achieves the MSR characteristics when both $\alpha$ and $\beta$ tend to infinity. The first explicit exact repair MSR code constructions which satisfy the bandwidth adaptivity are introduced in \cite{ExplicitMSR}. These constructions work for any parameters $k$, $n$, and all values of $d$ such that $k<d<n$. Although these constructions can achieve optimality for finite values of $\alpha$ and $\beta$, but the required value for these parameters are still very huge (\emph{i.e.} exponentially large in $n$), and hence they only achieve optimality for extremely large contents. Recently, \cite{ExplicitMSR_nearOptimal} introduced a modified version of the codes in \cite{ExplicitMSR} which achieves MSR optimality for much lower values of $\alpha$, at the cost of loosing bandwidth adaptivity. Indeed the MSR code in \cite{ExplicitMSR_nearOptimal} works only for $d=n-1$. In \cite{Coupled_Layer}, $d<n-1$, and practical $\alpha$ is achieved for MSR codes without bandwidth adaptivity.

In this work we address the design of MSR codes with bandwidth adaptive exact repair for small $\alpha$, and $\beta$, following the Product Matrix framework \cite{PM_Codes}. The code allows us to choose the number of helper nodes for each repair scenario independently, and it is capable to adjust the per-node repair bandwidth to its optimum value based on the number of selected helpers as in (\ref{eq_betaMSR}). Compared to the constructions proposed in \cite{ExplicitMSR} for a DSS with $n$ storage nodes the required values for  $\alpha$ and $\beta$ in the presented code is reduced to the $n^{\text{th}}$ root for the same set of other parameters. The main contributions of this work are explained in the next section, after formally defining the problem setup.

\section{Model and Main Results}

\subsection{Model}

The first element we consider for the model of our bandwidth adaptive distributed storage system is a predefined Galois field alphabet, $\mathbb{F}_{q}$ of size $q$. Hereafter we assume all the symbols stored or transmitted through the network are elements of $\mathbb{F}_q$. Besides, we will consider a homogeneous group of $n$ storage nodes, each capable of storing $\alpha$ symbols.

\begin{defn}[Bandwidth Adaptive Regenerating Code]
Consider the set of parameters $\alpha$, $n$, $k$, $\delta$, a set $D=\{d_{1}, \cdots, d_{\delta}\}$, with $d_{1}<\cdots<d_{\delta}$, and a \emph{total repair bandwidth function} $\gamma:D\rightarrow [\alpha,\infty)$. A bandwidth adaptive regenerating code $\mathcal{C}(n$, $k$, $D$, $\alpha$, $\gamma)$ is a regenerating code with per-node storage capacity $\alpha$, such that in each repair process the number of helpers, $d$, can be chosen arbitrarily from the set $D$. The choice of helper nodes is also arbitrary, and each of the chosen helpers then provides $\beta(d) = \gamma(d) / d$ repair symbols. Moreover, the data collector recovers the whole source data by accessing any arbitrary set of $k$ nodes.
\end{defn}

Note that the flexibility of the repair procedure depends on the parameter $\delta$, such that for a larger $\delta$, there are more options to select the number of helpers. In general, it is appealing to have small choices such as $d_{1}$, to guarantee the capability of code to perform repair when the number of available helpers is small, and also large choices such as $d_{\delta}$, to provide the capability of reducing the per-node repair bandwidth and hence the transmission delay whenever a larger number of helpers are available. The coding scheme we present in this work allows to design such a range for the elements in $D$.

\begin{defn}[Total Storage Capacity]
For the set of parameters $\alpha$, $n$, $k$, $\delta$, a set $D=\{d_{1}, \cdots, d_{\delta}\}$, and a given function $\gamma:D \rightarrow [\alpha,\infty)$, the \emph{total storage capacity} of a bandwidth adaptive distributed storage system is the maximum size of a file that could be stored in a network of $n$ storage nodes with per-node storage capacity $\alpha$, using a bandwidth adaptive regenerating code $\mathcal{C}(n$, $k$, $D$, $\alpha$, $\gamma)$. We will denote the storage capacity of such a system by $F(n$, $k$, $D$, $\alpha$, $\gamma)$, or simply $F$ when the parameters could be inferred from the context.
\end{defn}

\begin{defn}[Bandwidth Adaptive MSR Codes, and the Flexibility Degree]
For any choice of parameters $\alpha$, $n$, $k$, $\delta$, and set $D=\{d_{1}, \cdots, d_{\delta}\}$, the bandwidth adaptive regenerating codes that realize both the MDS property defined by $F(n$, $k$, $D$, $\alpha$, $\gamma) = k \alpha$, as well as the the MSR characteristic equation simultaneously for all $d\in D$, given as,
\begin{align}\label{eq_BWAMSR}
\alpha = (d-k+1)\beta(d), ~~ \forall{d \in D},
\end{align}
is referred to as \emph{bandwidth adaptive MSR} codes. Moreover, the number of elements in the set $D$ is referred to as \emph{flexibility degree} of the code, and is denoted by $\delta$.
\end{defn}


\subsection{Main Results}

The main contribution of this work is to provide a bandwidth adaptive MSR coding scheme with small per-node storage requirement. This coding scheme also guarantees exact repair for different choices of the number of helpers. This result is formally stated in the form of the following theorem. In this paper $\mathrm{lcm}()$ denotes the least common multiple.

\begin{thm}\label{Thm_main}
For arbitrary positive integers $n$, $k$, and $\delta$, there exists an adaptive bandwidth MSR code, with a finite per-node storage capacity $\alpha$ and total storage capacity $F$, satisfying 
\begin{align}
\alpha = (k-1)\mathrm{lcm}\left(1,2,\cdots,\delta\right), ~~~~ F = k \alpha, \nonumber
\end{align}
which is capable of performing exact repair using any arbitrary $d_{i}$ helpers, for 
\begin{align}
d_{i}=(i+1)(k-1),~i \in \{1, \cdots, \delta\}, \nonumber
\end{align}
and simultaneously satisfies the MSR characteristic equation (\ref{eq_BWAMSR}) for any $d_{i}$. \emph{i.e.,}
\begin{align}
\beta(d_{i}) = \frac{\alpha}{(d_{i}-k+1)},~i \in \{1, \cdots, \delta\}. \nonumber
\end{align}
\end{thm}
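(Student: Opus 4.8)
The plan is to realize the code within the Product Matrix framework by viewing each node's $\alpha=(k-1)\,\mathrm{lcm}(1,\dots,\delta)$ stored symbols as $L:=\mathrm{lcm}(1,\dots,\delta)$ parallel layers of $m:=k-1$ symbols each, all sharing a common encoding vector. First I would record the elementary identities that drive the construction: writing $d_{i}=(i+1)(k-1)$ gives $d_{i}-k+1=i(k-1)=im$ and hence $\beta(d_{i})=\alpha/(d_{i}-k+1)=L/i$, which is a positive integer precisely because $i\mid L$ for every $i\le\delta$ — this is the sole reason $\alpha$ carries the factor $\mathrm{lcm}(1,\dots,\delta)$, and it explains the appearance of the $\mathrm{lcm}$ in the statement. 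These same identities show that a size-$i$ group of $i$ layers carries $im=:\alpha'$ symbols per node and should be repairable from $d_{i}=\alpha'+m=\alpha'+(k-1)$ helpers at one symbol each, i.e.\ each group must itself be an MSR code with the original $k$ but the enlarged $d_{i}$; the target bandwidth $\beta(d_{i})=L/i$ then appears as the ``one symbol per helper'' cost multiplied by the $L/i$ groups needed to cover all $L$ layers.

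The object I would construct is a single product-matrix codeword $C=\Psi M$ in which node $x$ stores the $x$-th row. The encoder $\Psi$ is an $n\times(\delta+1)m$ generalized-Vandermonde matrix whose $x$-th row is $(\underline{\phi}_{x},\ \lambda_{x}\underline{\phi}_{x},\ \dots,\ \lambda_{x}^{\delta}\underline{\phi}_{x})$, built from an $n\times m$ matrix $\Phi$ with rows $\underline{\phi}_{x}$ and distinct scalars $\lambda_{x}$, so that the \emph{same} $\underline{\phi}_{x}$ is shared across every layer of node $x$. The message matrix $M$ is assembled from symmetric $m\times m$ blocks, arranged so that for each $i$ the restriction of $\Psi M$ to the appropriate $d_{i}$ encoder-coordinates and $im$ symbol-coordinates of a group reproduces exactly a standard PM-MSR message/codeword structure replicated over the $i$ layers of that group; the symmetry of the blocks is what will make the repair linear systems solvable. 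Before proceeding I would verify that the number of free symbols in $M$ equals $F=k\alpha=kmL$, matching the claimed storage capacity.

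I would then check the three requirements in turn. For data collection, I would show that the $km$ symbols read from any $k$ nodes determine all blocks of $M$: the Vandermonde structure of $\Psi$ separates the layer-contributions, and recovery of each symmetric block from the pair $(P,\ P\Lambda)$ is the classical decodability argument of the Product Matrix method, giving $F=k\alpha$ and the MDS property. For exact repair with $d_{i}$ helpers, I would partition the $L$ layers into $L/i$ groups of $i$ layers and repair the failed node $f$ group-by-group as in PM-MSR: each of the $d_{i}$ helpers $x$ transmits the single symbol obtained by projecting its group-content onto $\underline{\phi}_{f}$, these helper symbols assemble (using block symmetry) into a solvable system returning node $f$'s $im$ group-symbols, and summing one symbol per group over $L/i$ groups yields exactly $\beta(d_{i})=L/i$ symbols per helper and $\gamma(d_{i})=d_{i}\beta(d_{i})$, meeting (\ref{eq_BWAMSR}). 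Finally I would choose the field size $q$ (polynomial in $n$) so that every Vandermonde/Cauchy matrix arising in the data-collection system and in all $\delta$ families of repair systems is simultaneously invertible.

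The step I expect to be the main obstacle is the \emph{simultaneity}: the single stored content $\Psi M$ must be a valid MSR code for every $d_{i}$ at once, so one message matrix $M$ and one encoder $\Psi$ have to render all $\delta$ families of repair systems — one per group size $i$ — solvable while still keeping the global data-collection system invertible. Reconciling these competing linear-algebraic constraints, rather than carrying out any individual repair or decoding, is where the design of the symmetric-block structure and the choice $L=\mathrm{lcm}(1,\dots,\delta)$ do the real work, and it is there that I would concentrate the technical effort.
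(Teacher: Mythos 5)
Your numerology is right ($d_i-k+1=i(k-1)$, $\beta(d_i)=\mathrm{lcm}(1,\dots,\delta)/i$, $F=k\alpha$ matching the free symbols), and the framework you pick — a single product-matrix codeword with symmetric $m\times m$ blocks, a generalized-Vandermonde encoder with rows of the form $(\underline{\phi}_x,\lambda_x\underline{\phi}_x,\dots)$, and repair by grouping layers into groups of size $i$ with one projected symbol per helper per group — is exactly the paper's. But the proposal has a genuine gap: the one object that makes the whole thing work, the arrangement of the symmetric blocks inside $M$, is never specified, and you yourself flag the "simultaneity" of all $\delta$ groupings plus $k$-node reconstruction as the unresolved obstacle where "the real work" lies. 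The paper's answer is not a collection of \emph{parallel, replicated} PM-MSR layers: it is an \emph{overlapping staircase} $M$ of size $(z_\delta+1)\mu\times z_\delta\mu$ (with $z_\delta=\mathrm{lcm}(1,\dots,\delta)$ and $\mu=k-1$) in which block-column $j$ carries $S_{2j-2},S_{2j-1},S_{2j}$ and consecutive columns \emph{share} the block $S_{2j}$. That shared block is what (i) lets the data collector with only $k=\mu+1$ nodes peel the columns sequentially via the PM reconstruction lemma, even though each group viewed in isolation would be an MSR code with $d_i>2k-2$ and hence not directly decodable from $k$ nodes by the plain PM argument; and (ii) lets the repair decoder chain the $\beta(d)$ group-systems, each group's $d\times d$ Vandermonde system containing $\mu$ extra columns precisely for the connecting block $S_{2im}$, whose recovered contribution is subtracted from the next group's received symbols. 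Truly independent layers would not admit a single $M$ consistent with the incompatible partitions into groups of size $2,3,\dots,\delta$ simultaneously; the chain is the resolution, and it is absent from your construction.

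A secondary but concrete error: your encoder $\Psi$ has $(\delta+1)m$ columns, which cannot produce $\alpha=\mathrm{lcm}(1,\dots,\delta)\cdot m$ stored symbols per node through a message matrix supporting the sequential peeling; the paper's $\Psi$ is $n\times(z_\delta+1)\mu$, i.e., $z_\delta+1$ scaled copies of $\underline{\phi}_x$ (one per block-row of $M$), not $\delta+1$. Also, no appeal to a large field or to simultaneous invertibility over many random choices is needed: every matrix inverted in the paper's decoding (the $\Omega_{\mathcal{H}}(i)$ and the matrices in the reconstruction lemma) is a square generalized Vandermonde matrix in the distinct nonzero $e_j$, hence automatically invertible for any $q>n$.
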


Section \ref{Sec_Coding_Scheme} provides a constructive proof for this theorem.

\section{Coding Scheme}\label{Sec_Coding_Scheme}
The coding scheme presented in this work is closely related to the Product Matrix MSR code introduced in \cite{PM_Codes}, and could be considered as an extension of the Product Matrix code, that achieves bandwidth adaptivity. To demonstrate this connection we will try to follow the notation used in \cite{PM_Codes}. 

In the design of the proposed coding scheme, we chose a design parameter $\mu$, and the required flexibility degree $\delta$. All the other parameters of the code including $\alpha$, $F$, $k$, $D=\{d_{1}, \cdots, d_{\delta} \}$, and $\beta(d_i)$ will be then determined based on $\mu$, and $\delta$ as follows. The per-node storage capacity is
\begin{align}\label{eq_alpha_design}
\alpha = \mu\cdot \mathrm{lcm}\left(1, \cdots , \delta \right).
\end{align}
Moreover, we have $k=\mu+1$, and $F = (\mu+1)\alpha$, which satisfies the MDS property. Finally, for $D$ we have
\begin{align}\label{eq_D_design}
D=\{d_{1}, \cdots, d_{\delta}\},~~d_{i} = (i+1)\mu, ~i\in \{1, \cdots, \delta \}.
\end{align}
and for any $d_{i} \in D$, the associated per-node and total repair bandwidths denoted by $\beta(d_{i})$, and $\gamma(d_{i})$ respectively are
\begin{align}\label{eq_beta_design}
\beta(d_{i}) = \frac{\alpha}{i \mu},~~~~\gamma(d_{i})=d_{i}\beta(d_{i})=\frac{(i+1)\alpha}{i}.
\end{align}

\subsection{Coding for Storage}\label{SubSec_CodingScheme}
We begin the introduction of the coding scheme by describing the process of encoding the source symbols and deriving the encoded symbols to be stored in the storage nodes. Similar to the product matrix codes, the first step in encoding for storage in this scheme is to arrange the information symbols in a matrix, denoted by $M$, which we refer to hereafter as the \emph{data matrix}. Let
\begin{align}
z_{\delta}=\mathrm{lcm}\left(1, \cdots , \delta \right). \nonumber
\end{align}
The data matrix in our coding scheme is structured as follows,
\begin{align}\label{eq_M_define}
M = \left[\begin{array}{c c c c c c c c}
S_{1} ~ & ~ S_{2} ~ & ~ O ~ & ~ O ~ & ~ O ~ & ~ O ~ & \cdots & O \\
S_{2} ~ & ~ S_{3} ~ & ~ S_{4} ~ & ~ O ~ & ~ O ~ & ~ O ~ & \cdots & O \\
O ~ & ~ S_{4} ~ & ~ S_{5} ~ & ~ S_{6} ~ & ~ O ~ & ~ O ~ & \cdots & O \\
O ~ & ~ O ~ & ~ S_{6} ~ & ~ S_{7} ~ & ~ S_{8} ~ & ~ O ~ & \cdots & O \\
\vdots ~ & & & & & \ddots & & \vdots \\
O ~ & ~ \cdots & & & O & S_{2z_{\delta}-4} & S_{2z_{\delta}-3} & S_{2z_{\delta}-2} \\
O ~ & ~ \cdots & & & O & O & S_{2z_{\delta}-2} & S_{2z_{\delta}-1} \\
O ~ & ~ \cdots & & & O & O & O & S_{2z_{\delta}}
\end{array}\right],
\end{align}
where, each $S_{i},~i \in \{1, \cdots, 2z_{\delta}\}$ is a symmetric $\mu\times \mu$ matrix filled with $\mu(\mu+1)/2$ source symbols, and $O$ is a $\mu \times \mu$ zero matrix. Therefore, $M$'s dimensions are $(z_{\delta}+1)\mu \times z_{\delta}\mu$. Note that the total number of distinct source symbols is
\begin{align}
F = 2 z_{\delta}\frac{\mu(\mu+1)}{2} = k\alpha. \nonumber
\end{align}



The source encoder then creates the vector of coded symbols for each of the $n$ storage nodes, by calculating the product of a node-specific coefficient vector and the data matrix. To describe this process, we first need the following definition.
\begin{defn}[Generalized Vandermonde Matrix]
For distinct, non-zero elements $e_1,\cdots,e_m$ in $\mathbb{F}_q$, and some integer $c \geq 0$, a matrix $A_{m \times \ell}$ with entries $A_{i,j} = e_i^{c+j-1}$ is referred to as a \emph{generalized Vandermonde} matrix.
\end{defn}

In particular, for distinct, non-zero elements $e_{i}$'s in $\mathbb{F}_{q}$, with $i\in \{1, \cdots, n\}$ we define a generalized Vandermonde matrix of size $n\times (z_{\delta}+1)\mu$ as
\begin{align}
\Psi = \left[\begin{array}{c c c c c}
e_{1}~ & ~e_{1}^{2}~ & \cdots & ~e_{1}^{(z_{\delta}+1)\mu} \\
e_{2}~ & ~e_{2}^{2}~ & \cdots & ~e_{2}^{(z_{\delta}+1)\mu} \\
~ & ~~ & \vdots &  \\
e_{n}~ & ~e_{n}^{2}~ & \cdots & ~e_{n}^{(z_{\delta}+1)\mu}
\end{array}\right]. \nonumber
\end{align}
Note that submatrices of $\Psi$ are also generalized Vandermonde matrices. Moreover, one can show that any square generalized Vandermonde matrix is invertible \cite{BAMSR}. 

We denote the $j^{\text{th}}$ row of $\Psi$ by $\underline{\psi}_{j}$. Then the vector of encoded symbols to be stored on node $j,~j \in \{1, \cdots, n\}$, denoted by $\underline{x}_{j}$, is calculated as
\begin{align}
\underline{x}_{j}=\underline{\psi}_{j}M. \nonumber
\end{align}

Note that the per-node storage capacity requirement for this coding scheme is then $z_{\delta}\mu$ as given by (\ref{eq_alpha_design}).

\subsection{Data Reconstruction}

In order to reconstruct all the information stored in the system, the data collector accesses $k$ arbitrary nodes in the network and downloads all their contents. To describe the details of the decoding we use the following lemma.

\begin{lem}\label{Lem_PMreconstruction}
Let $X$ and $\Psi$ be two known generalized Vandermonde matrices of size $(\mu+1) \times \mu$, and $\Delta$ be a known diagonal matrices of size $(\mu+1) \times (\mu+1)$ with non-zero distinct diagonal elements. Then one can uniquely solve the equation 
\begin{align}
X = \Psi A + \Delta \Psi B, \nonumber
\end{align}
for unknown $\mu \times \mu$ symmetric matrices $A$, and $B$.
\end{lem}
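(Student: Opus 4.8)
The plan is to follow the reconstruction strategy of the Product Matrix MSR code, exploiting the symmetry of $A$ and $B$ together with the invertibility of generalized Vandermonde matrices. I would first reduce the matrix equation to a collection of scalar equations, and then recover the two symmetric matrices by a double polynomial-interpolation argument.

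First I would write the equation row-by-row. Let $\underline{\psi}_i$ and $\underline{x}_i$ denote the $i$-th rows of $\Psi$ and $X$ (each of length $\mu$), and let $\delta_i$ be the $i$-th diagonal entry of $\Delta$. Row $i$ reads $\underline{x}_i = \underline{\psi}_i A + \delta_i \underline{\psi}_i B$. Multiplying on the right by the column $\underline{\psi}_j^{\mathsf{T}}$ and using the symmetry $A = A^{\mathsf{T}}$, $B = B^{\mathsf{T}}$, I set $a_{ij} := \underline{\psi}_i A \underline{\psi}_j^{\mathsf{T}}$ and $b_{ij} := \underline{\psi}_i B \underline{\psi}_j^{\mathsf{T}}$, which satisfy $a_{ij} = a_{ji}$ and $b_{ij} = b_{ji}$. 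For each pair $i \neq j$ this yields the two \emph{known} quantities $\underline{x}_i \underline{\psi}_j^{\mathsf{T}} = a_{ij} + \delta_i b_{ij}$ and $\underline{x}_j \underline{\psi}_i^{\mathsf{T}} = a_{ij} + \delta_j b_{ij}$; since the $\delta_i$ are distinct, this $2 \times 2$ system is nonsingular and determines every off-diagonal $a_{ij}$ and $b_{ij}$ with $i \neq j$.

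The heart of the argument is to recover the full symmetric matrix $A$ from these off-diagonal inner products alone (and identically for $B$). Writing $\underline{\psi}_i = e_i^{c}(1, e_i, \ldots, e_i^{\mu-1})$ and dividing out the nonzero factors $e_i^c e_j^c$, the data become the values $g(e_i, e_j)$ for $i \neq j$, where $g(x,y) = \sum_{p,q=1}^{\mu} A_{pq}\, x^{p-1} y^{q-1}$ is a symmetric bivariate polynomial of degree at most $\mu-1$ in each variable. The counting matches exactly: there are $\binom{\mu+1}{2} = \mu(\mu+1)/2$ off-diagonal pairs and $A$ has precisely $\mu(\mu+1)/2$ free entries. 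To establish uniqueness I would show the linear map $A \mapsto \{g(e_i,e_j)\}_{i\neq j}$ is injective, \emph{i.e.}, that $g(e_i,e_j)=0$ for all $i\neq j$ forces $A = 0$. Fixing $i$, the univariate polynomial $y \mapsto g(e_i, y)$ has degree at most $\mu-1$ yet vanishes at the $\mu$ distinct points $\{e_j : j \neq i\}$, so it is identically zero; this holds for every $i \in \{1,\dots,\mu+1\}$. Hence for each fixed $q$ the polynomial $x \mapsto \sum_{p} A_{pq}\, x^{p-1}$, of degree at most $\mu-1$, vanishes at the $\mu+1$ distinct points $e_1,\dots,e_{\mu+1}$ and is therefore zero, giving $A = 0$.

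Finally, injectivity of the overall linear map $(A,B) \mapsto \Psi A + \Delta \Psi B$ on the space of symmetric pairs, whose dimension $\mu(\mu+1)$ equals the number of entries of $X$, upgrades to a bijection; thus a solution exists and is unique for the given $X$ (and, incidentally, the hypothesis that $X$ itself be generalized Vandermonde is not needed for solvability). The main obstacle is the middle step: seeing that the off-diagonal entries \emph{alone} suffice to pin down a symmetric matrix. The key enabling feature is that $\Psi$ has $\mu+1$ rows, one more than its width $\mu$, which is exactly what lets the first interpolation see $\mu$ roots against degree $\mu-1$ and the second see $\mu+1$ roots against the same degree; with only $\mu$ rows the argument would collapse.
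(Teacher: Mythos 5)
Your proof is correct and follows essentially the standard Product Matrix reconstruction argument (the paper itself defers this proof to \cite{BAMSR}): right-multiply by $\underline{\psi}_j^{\intercal}$, use the symmetry of $A$ and $B$ plus the distinctness of the $\delta_i$ to solve the $2\times 2$ systems for all off-diagonal values $a_{ij},b_{ij}$, and then recover $A$ and $B$ from those values via the Vandermonde structure. Your bivariate-interpolation step is just an equivalent rephrasing of the usual invocation of the invertibility of the row-deleted $\mu\times\mu$ generalized Vandermonde submatrices, and your side remarks (the dimension count giving existence, and the superfluousness of the hypothesis that $X$ itself be generalized Vandermonde) are also correct.
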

 
The proof of this lemma is presented in \cite{BAMSR}. The following theorem explains the data reconstruction procedure.

\begin{thm}\label{Thm_DataReconstruction}
For the coding scheme presented in subsection \ref{SubSec_CodingScheme}, there exists a decoding scheme to reconstruct all source symbols arranged in the data matrix $M$ from the encoded content of any arbitrary group of $k=\mu+1$ storage nodes.
\end{thm}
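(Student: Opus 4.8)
The plan is to decode the downloaded content one block column at a time, peeling off the symmetric blocks $S_{i}$ in pairs and reducing each step to the elementary equation solved by Lemma~\ref{Lem_PMreconstruction}. Let $\mathcal{D}$ denote the set of $k=\mu+1$ contacted nodes and $\Psi_{\mathcal{D}}$ the corresponding $(\mu+1)\times(z_{\delta}+1)\mu$ submatrix of $\Psi$, so the data collector observes the $(\mu+1)\times z_{\delta}\mu$ matrix $Y=\Psi_{\mathcal{D}}M$. I would first partition $Y$ and $\Psi_{\mathcal{D}}$ into blocks of $\mu$ columns, writing $Y^{(c)}$ for the $c$-th block column of $Y$ and $\Phi$ for the first block column of $\Psi_{\mathcal{D}}$, which is itself an $(\mu+1)\times\mu$ generalized Vandermonde matrix. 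Setting $\Lambda=\mathrm{diag}(e_{i}^{\mu})_{i\in\mathcal{D}}$, the key structural observation is that the $\ell$-th block column of $\Psi_{\mathcal{D}}$ equals $\Lambda^{\ell-1}\Phi$, since its $(i,j)$ entry factors as $e_{i}^{(\ell-1)\mu+j}=e_{i}^{(\ell-1)\mu}e_{i}^{j}$.

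Next I would exploit the banded support of $M$ in (\ref{eq_M_define}): block column $c$ of $M$ contains only the three symmetric blocks $S_{2c-2},S_{2c-1},S_{2c}$ (with $S_{2c-2}$ absent for $c=1$). Combining this with the factorization above gives, for $c\ge 2$,
\begin{align}
Y^{(c)}=\Lambda^{c-2}\Phi S_{2c-2}+\Lambda^{c-1}\Phi S_{2c-1}+\Lambda^{c}\Phi S_{2c}, \nonumber
\end{align}
together with the degenerate relation $Y^{(1)}=\Phi S_{1}+\Lambda\Phi S_{2}$ for the first block column.

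The recovery then proceeds by induction on $c$. The base case $c=1$ is already in the exact form of Lemma~\ref{Lem_PMreconstruction}, with the roles of its $(\Psi,\Delta,A,B)$ played by $(\Phi,\Lambda,S_{1},S_{2})$, so the lemma returns the symmetric blocks $S_{1}$ and $S_{2}$. For the inductive step, assuming $S_{2c-2}$ has already been found, I would premultiply the displayed equation by $\Lambda^{-(c-1)}$ (legitimate since every $e_{i}\neq 0$) and subtract the now-known term, isolating $\Phi S_{2c-1}+\Lambda\Phi S_{2c}$; a second application of Lemma~\ref{Lem_PMreconstruction} yields $S_{2c-1}$ and $S_{2c}$. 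Sweeping $c$ from $1$ to $z_{\delta}$ recovers all $2z_{\delta}$ blocks, hence every source symbol in $M$.

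The step I expect to be the main obstacle is guaranteeing that Lemma~\ref{Lem_PMreconstruction} is genuinely applicable at every stage, which requires $\Lambda$ to have distinct, nonzero diagonal entries for \emph{every} admissible set $\mathcal{D}$ of contacted nodes, i.e.\ the map $e\mapsto e^{\mu}$ must be injective on $\{e_{1},\dots,e_{n}\}$. This is not automatic from the $e_{i}$ merely being distinct and would have to be secured at the design stage, for instance by selecting the $e_{i}$ so that $\gcd(\mu,q-1)=1$, or by directly choosing $\mu$-th-power-distinct elements in a sufficiently large $\mathbb{F}_{q}$. Once this field condition is in place, the repeated reduction to the generalized-Vandermonde equation of Lemma~\ref{Lem_PMreconstruction} renders the remaining arguments routine, and the symmetry of the recovered blocks is preserved automatically because the lemma produces symmetric solutions.
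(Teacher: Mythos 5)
Your proof is correct and follows essentially the same route as the paper: block-column partitioning of the downloaded matrix, the identity relating consecutive block columns of the Vandermonde matrix through the diagonal matrix of $\mu$-th powers, and sequential peeling of the already-recovered block $S_{2c-2}$ followed by an application of Lemma~\ref{Lem_PMreconstruction} at each step (your extra normalization by $\Lambda^{-(c-1)}$ is cosmetic, since the lemma's generalized Vandermonde matrices already accommodate the column offset the paper works with directly). Your observation that the diagonal entries $e_{i}^{\mu}$ must be distinct for the lemma to apply is a genuine design condition that the paper leaves implicit, so flagging it is a point in your favor rather than a gap.
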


\begin{proof}
Let's assume the set of accessed nodes is $\{\ell_{1}, \cdots, \ell_{k}\}$. Moreover, let's denote the $k\times (z_{\delta}+1)\mu$ submatrix of $\Psi$ associated with the nodes $\ell_{1}, \cdots, \ell_{k}$, by $\Psi_{\text{DC}}$. We will further denote the submatrix of $\Psi_{\text{DC}}$ consisting of columns $(i-1)\mu+1$ through $i \mu$, by $\Psi_{\text{DC}}(i)$. In other words, we have a partitioning of $\Psi_{\text{DC}}$'s columns as
\begin{align}
\Psi_{\text{DC}} = \left[\Psi_{\text{DC}}(1), \cdots, \Psi_{\text{DC}}(z_{\delta}+1) \right]. \nonumber
\end{align}
As a result, defining the diagonal matrix
\begin{align}
\Lambda_{\text{DC}} = \left[\begin{array}{c c c c c}
e_{\ell_{1}}^{\mu}~ & ~0~ & ~0~ & \cdots & ~0 \\
0~ & ~e_{\ell_{2}}^{\mu}~ & ~0~ & \cdots & ~0 \\
\vdots ~& ~ & ~ & \ddots & ~\vdots \\
0~ & ~0~ & ~0~ & \cdots & ~e_{\ell_{k}}^{\mu} 
\end{array} \right], \nonumber
\end{align}
we have
\begin{align}\label{eq_PsiDC}
\Psi_{\text{DC}}(i+1) = \Lambda_{\text{DC}}\Psi_{\text{DC}}(i).
\end{align}
Similarly, let's denote the matrix consisting of the collected encoded vectors by $X_{\text{DC}}$, and its partitioning to $k\times \mu$ submatrices $X_{\text{DC}}(i)$, $i\in\{1,\cdots,z_{\delta}\}$ as follows
\begin{align}
X_{\text{DC}} = \left[\begin{array}{c}
\underline{x}_{\ell_{1}} \\
\vdots \\
\underline{x}_{\ell_{k}}
\end{array} \right] = \left[ X_{\text{DC}}(1), \cdots, X_{\text{DC}}(z_{\delta}) \right]. \nonumber
\end{align}

The decoding procedure for data reconstruction consists of $z_{\delta}$ consecutive steps. The first step uses only the submatrix $X_{\text{DC}}(1)$. Similar to the data reconstruction for product matrix MSR codes, using (\ref{eq_PsiDC}) we have,
\begin{align}
X_{\text{DC}}(1) &= \left[\Psi_{\text{DC}}(1), \Psi_{\text{DC}}(2) \right]\left[\begin{array}{c}
S_{1} \\
S_{2}
\end{array}\right] \nonumber \\ 
&= \Psi_{\text{DC}}(1)S_{1} + \Lambda_{\text{DC}}\Psi_{\text{DC}}(1)S_{2}. \nonumber
\end{align}

Using Lemma \ref{Lem_PMreconstruction}, the decoder recovers both $S_{1}$, and $S_{2}$, using $X_{\text{DC}}(1)$, in step one. Then, for $i\in\{2, \cdots, z_{\delta}\}$, the decoder performs step $i$ by using submatrix $X_{\text{DC}}(i)$, and decodes submatrices $S_{2i-1}$, and $S_{2i}$, as follows. 

In step $i$ of the data reconstruction decoding, having the submatrix $S_{2(i-1)}$ already recovered from step $i-1$, the decoder first calculates
\begin{align}\label{eq_X_i_calc}
\hat{X}_{\text{DC}}(i) &= X_{\text{DC}}(i) - \Psi_{\text{DC}}(i-1) S_{2(i-1)}. \nonumber \\
&= \left[\Psi_{\text{DC}}(i), \Psi_{\text{DC}}(i+1) \right]\left[\begin{array}{c}
S_{2i-1} \\
S_{2i}
\end{array}\right].
\end{align}
Then from (\ref{eq_PsiDC}), and (\ref{eq_X_i_calc}), we have
\begin{align}
\hat{X}_{\text{DC}}(i) = \Psi_{\text{DC}}(i) S_{2i-1} + \Lambda_{\text{DC}}\Psi_{\text{DC}}(i) S_{2i}. \nonumber
\end{align}
Again using Lemma \ref{Lem_PMreconstruction}, decoder recovers $S_{2i-1}$, and $S_{2i}$ at the end of the step $i$ of the decoding. Hence, by finishing step $z_{\delta}$, decoder reconstructs all the submatrices in $M$. 
\end{proof}

\subsection{Bandwidth Adaptive Exact Repair}

We now describe the bandwidth adaptive repair procedure, by assuming that node $f$ is failed and the set of helpers selected for the repair are $\mathcal{H} = \{\ell_{1}, \cdots, \ell_{d}\}$, for some arbitrary $d\in D$. The following theorem describes the repair procedure in this bandwidth adaptive MSR code.

\begin{thm}\label{Thm_BWARepair}
Consider the coding scheme presented in subsection \ref{SubSec_CodingScheme}, with design parameters $\mu$, and $\delta$, and $D$ as defined in (\ref{eq_D_design}). For any arbitrary failed node $f$, and any arbitrary set of helpers $\mathcal{H}=\{\ell_{1}, \cdots, \ell_{d}\}$, for some $d\in D$, there exists a repair scheme for recovering the content of node $f$ with per-node repair bandwidth, 
\begin{align}\label{eq_BWAbeta}
\beta(d) = \frac{\alpha}{d-\mu}.
\end{align}
\end{thm}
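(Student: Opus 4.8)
The plan is to mimic the data-reconstruction argument of Theorem~\ref{Thm_DataReconstruction}, but now in the ``repair direction.'' The key structural fact I would exploit is that the content of node $f$ is $\underline{x}_{f} = \underline{\psi}_{f} M$, where $\underline{\psi}_{f} = \left[e_{f}, e_{f}^{2}, \cdots, e_{f}^{(z_{\delta}+1)\mu}\right]$. Partitioning $\underline{\psi}_{f}$ into $(z_{\delta}+1)$ blocks $\underline{\psi}_{f}(1), \cdots, \underline{\psi}_{f}(z_{\delta}+1)$ of length $\mu$ each, the band structure of $M$ in \eqref{eq_M_define} means that the stored vector decomposes into overlapping pieces, each of the form $\underline{\psi}_{f}(i)S_{2i-1} + \underline{\psi}_{f}(i)S_{2i} + \underline{\psi}_{f}(i+1)S_{2i}$ contributions, so that recovering $\underline{x}_{f}$ amounts to recovering, for every block index, the product $\underline{\psi}_{f}(i)S_{j}$ for the relevant $S_{j}$. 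The goal is therefore to show that each helper can transmit exactly $\beta(d)=\alpha/(d-\mu)$ symbols from which the decoder can assemble all of these row-vector--times--symmetric-matrix products.

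First I would set up the repair query. Following the product-matrix MSR philosophy, each helper $\ell_{t}$ computes $\underline{x}_{\ell_{t}} M^{\!\top}$-type projections; concretely, each helper forms the inner products of its stored vector with a repair vector built from the evaluation point $e_{f}$ of the failed node. Because each $S_{i}$ is symmetric and $\mu\times\mu$, the classical product-matrix trick gives that $\underline{\psi}_{f}(i) S_{i}$ can be obtained from the helpers' projections onto $\underline{\psi}_{f}(i)^{\!\top}$, provided enough distinct helpers participate to invert the relevant generalized Vandermonde system. The crucial counting step is to verify that with $d = (i+1)\mu = d_{i}$ helpers, the number of independent linear observations, namely $d$ per transmitted symbol coordinate, suffices to resolve the $\mu$ unknown columns appearing in each diagonal band of $M$ simultaneously, and that the total number of symbols each helper sends equals $\alpha/(d-\mu)$. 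Since $\alpha = \mu z_{\delta}$ and $d-\mu = i\mu$, this target is $\beta(d_{i}) = z_{\delta}/i$, matching \eqref{eq_beta_design}, so the arithmetic I must hit is that the band of $M$ has exactly $z_{\delta}/i$ ``active'' symmetric blocks per helper contribution when $d=d_{i}$.

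Next I would organize the recovery into stages exactly paralleling the $z_{\delta}$ steps of Theorem~\ref{Thm_DataReconstruction}, invoking Lemma~\ref{Lem_PMreconstruction} at each stage: having peeled off the products involving previously recovered blocks, the decoder is left at each stage with an equation of the form $\hat{X} = \Psi' A + \Delta' \Psi' B$ for symmetric $A,B$, which the lemma solves uniquely. The helpers' transmitted symbols are designed so that aggregating them reproduces exactly the right-hand sides $\underline{\psi}_{f}(i)S_{2i-1}+\big(e_{f}^{\mu}\big)\underline{\psi}_{f}(i)S_{2i}$ needed to rebuild each length-$\mu$ segment of $\underline{x}_{f}$. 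Summing the recovered segments over all band positions then yields the full vector $\underline{\psi}_{f}M = \underline{x}_{f}$, completing exact repair.

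The main obstacle I anticipate is the simultaneity across different values of $d$: the same stored data must admit optimal repair for every $d_{i}=(i+1)\mu \in D$, so the repair vectors for a given $d_{i}$ must be chosen to extract precisely the $\beta(d_{i})=z_{\delta}/i$ symbols per helper while leaving the generalized Vandermonde subsystems invertible. The delicate point is confirming that the band structure of $M$, whose width was deliberately tied to $z_{\delta}=\mathrm{lcm}(1,\cdots,\delta)$, partitions cleanly into groups of $i$ consecutive diagonal blocks for \emph{every} $i\le\delta$; this is exactly where the $\mathrm{lcm}$ choice is essential, since $i \mid z_{\delta}$ guarantees an integer number of blocks per helper-group and hence an integer $\beta(d_{i})$. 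Verifying that the generalized Vandermonde submatrices arising in each of these groupings remain square and invertible (using the invertibility fact cited from \cite{BAMSR}) is the technical heart of the argument, and I expect the bookkeeping of which $S_{j}$ blocks each helper-group reconstructs to be the most error-prone part.
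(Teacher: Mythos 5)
Your high-level plan---peel the banded data matrix block by block, exploit the symmetry of the $S_i$'s to turn column recoveries into row recoveries, and lean on the divisibility $i \mid z_\delta$ plus generalized-Vandermonde invertibility---matches the spirit of the paper's proof. But there are two concrete gaps that would make your argument fail as written. First, you invoke Lemma~\ref{Lem_PMreconstruction} at each repair stage, claiming the decoder faces an equation $\hat{X}=\Psi'A+\Delta'\Psi'B$ to be solved for the full symmetric blocks $A,B$. That lemma is the data-reconstruction primitive: it recovers $\mu\times\mu$ symmetric matrices from a known $(\mu+1)\times\mu$ matrix of observations, i.e.\ from $k$ nodes each contributing $\mu$ symbols per stage. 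In repair each helper contributes only \emph{one} scalar per stage, so the decoder's per-stage observation is a $d\times 1$ vector, and there is neither the data to recover the full $S_j$'s nor any need to: the paper recovers only the products $M_i(\underline{\psi}_f(i))^{\intercal}$ and $S_{2im}(e_f^{(im-1)\mu}\underline{\phi}_f)^{\intercal}$ by inverting the single $d\times d$ generalized Vandermonde matrix $\Omega_{\mathcal{H}}(i)$, then transposes using symmetry. Substituting Lemma~\ref{Lem_PMreconstruction} here is the wrong linear-algebra step and the dimension count does not close.

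Second, you never pin down what the helpers actually transmit, and your granularity is off. The crux of hitting $\beta(d)=\alpha/(d-\mu)$ is that each helper partitions its $\alpha$-length content into $\beta(d)$ segments of length $d-\mu=m\mu$ (not $\mu$) and sends one inner product $\underline{x}_h(i)\left(\underline{\psi}_f(i)\right)^{\intercal}$ per segment, where $\underline{\psi}_f(i)$ is the matching length-$(d-\mu)$ segment of the failed node's coefficient vector. Correspondingly, $M$ is re-partitioned into $\beta(d)$ diagonal blocks $M_i$ of size $m\mu\times m\mu$ overlapping in the single $\mu\times\mu$ blocks $S_{2im}$, and each decoding stage recovers one length-$m\mu$ segment of $\underline{x}_f$, passing the overlap term $e_f^{im\mu}\underline{\phi}_fS_{2im}$ forward to be cancelled in the next stage. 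Your description works at length-$\mu$ segments with right-hand sides of the form $\underline{\psi}_f(i)S_{2i-1}+e_f^{\mu}\underline{\psi}_f(i)S_{2i}$, which would require $z_\delta$ stages against only $z_\delta/m$ available symbols per helper; the bookkeeping you flag as ``error-prone'' is in fact where your plan breaks, and fixing it requires introducing the $m\mu$-block partition and the $\Omega_{\mathcal{H}}(i)$ inversion explicitly.
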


%
\begin{proof}
Without loss of generality let $d=(m+1)\mu$, for some $m\in\{1, \cdots, \delta\}$. Each helper node $h\in \mathcal{H}$, creates $\beta(d)=\alpha/(d-\mu)$ repair symbols to repair node $f$ as follows. First $h$ partitions its encoded content into $\beta(d)$ equal segments as 
\begin{align}\label{eq_xh_repair_partition}
\underline{x}_{h} = \left[\underline{x}_{h}(1), \cdots, \underline{x}_{h}(\beta(d)) \right].
\end{align}
Note that (\ref{eq_alpha_design}), and (\ref{eq_D_design}) guarantee that for any $d\in D$, $\alpha$ is an integer multiple of $d-\mu$, hence $\beta(d)$ is an integer. Each segment $\underline{x}_{h}(i)$ is then of size $d-\mu = m \mu$. Similarly, we split the first $\alpha$ entries of a coefficient vector assigned to node $\ell$, namely $\underline{\psi}_{\ell}$, into $\beta(d)$ equal segments as
\begin{align}\label{eq_psi_repair_partition}
\underline{\psi}_{\ell}(1:\alpha) = \left[ \underline{\psi}_{\ell}(1), \cdots, \underline{\psi}_{\ell}(\beta(d)) \right],
\end{align}
where each segment $\underline{\psi}_{\ell}(i)$ is of size $d-\mu = m \mu$.

Now each helper node $h\in \mathcal{H}$, creates its repair symbols as
\begin{align}\label{eq_rh_define}
\underline{r}(h,f) &= \left[ r_{1}(h,f), \cdots, r_{\beta(d)}(h,f)\right] \nonumber \\
&= \left[\underline{x}_{h}(1) \left(\underline{\psi}_{f}(1) \right)^{\intercal}, \cdots, \underline{x}_{h}(\beta(d)) \left(\underline{\psi}_{f}(\beta(d)) \right)^{\intercal} \right].
\end{align}

The repair decoder then receives a $d\times \beta(d)$ matrix
\begin{align}
\Upsilon_{\mathcal{H}} = \left[\begin{array}{c}
\underline{r}(\ell_{1},f) \\
\vdots \\
\underline{r}(\ell_{d},f)
\end{array} \right]. \nonumber
\end{align}
We then introduce the following partitioning of the matrix $\Upsilon_{\mathcal{H}}$, into $\beta(d)$ submatrices, as follows
\begin{align}\label{eq_Upsilon_partition}
\Upsilon_{\mathcal{H}} = \left[ \Upsilon_{\mathcal{H}}(1), \cdots, \Upsilon_{\mathcal{H}}(\beta(d)) \right],
\end{align}
where $\Upsilon_{\mathcal{H}}(i)$, $i\in\{1,\cdots,\beta\}$ is the $i^{\text{th}}$ column of $\Upsilon_{\mathcal{H}}$.

Before starting to describe the repair decoding procedure, we need to introduce some notations associated to a given repair scenario. Consider a repair procedure with $d=(m+1)\mu$, $d\in D$. For the corresponding $\beta(d)=\alpha/(d-\mu)$ we will partition matrix $M$ as depicted in Fig. \ref{Fig_M_Repair_Partition}. Note that this results in $\beta(d)$ non-overlapping diagonal submatices $M_{i}$, $i \in \{1, \cdots, \beta(d)\}$, each of size $m\mu \times m\mu$, along with $\mu \times \mu$ submatrices $S_{2m}, S_{4m}, \cdots, S_{2\beta(d)m}=S_{2z_{\delta}}$ as shown in the figure. From the construction of the data matrix, introduced in (\ref{eq_M_define}), each $M_{i}$ submatrix will be symmetric. As a result, the data matrix $M$ could be interpreted in terms of the submatrices $M_{i}$, and $S_{2i}$ for $i\in \{1, \cdots, \beta(d)\}$, associated to a repair procedure with $d=(m+1)\mu$, $d\in D$.
\begin{figure*}[!h]
\centering
\resizebox{7.2 in}{!}{
\includegraphics[scale=1]{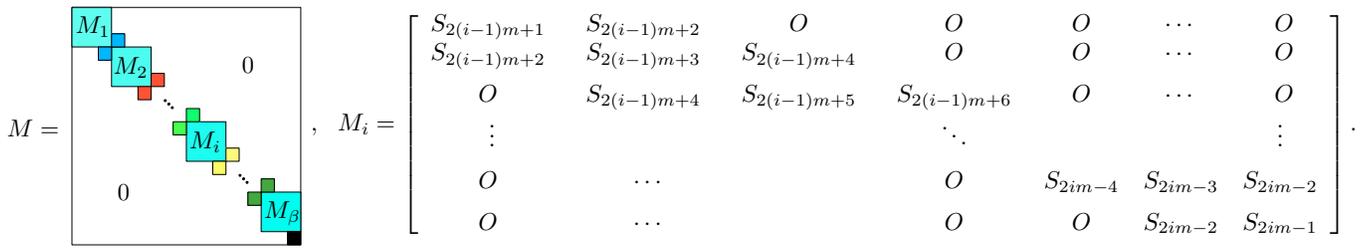}}
\caption{In the above figure $\beta$ represents $\beta(d)$. Moreover, each of the small coloured squares represent a non-zero submatrix of $M$; blue: $S_{2m}$, red: $S_{4m}$, light green: $S_{2(i-1)m}$, yellow: $S_{2im}$, dark green: $S_{2(\beta(d)-1)m}$, and black: $S_{2z_{\delta}}$.}
\label{Fig_M_Repair_Partition}
\hrulefill
\end{figure*}

In addition, for any node $\ell$, we introduce the vector $\underline{\phi}_{\ell}$ as,
\begin{align}\label{eq_phi_ell}
\underline{\phi}_{\ell} = \left[e_{\ell}~, ~\cdots~, ~e_{\ell}^{\mu} \right].
\end{align}

Finally the last notations we use to describe the adaptive repair decoding scheme, using a given set of helpers $\mathcal{H}=\{\ell_{1}, \cdots, \ell_{d}\}$, is,
\begin{align}\label{eq_Omega_i}
\Omega_{\mathcal{H}}(i) = \left[ \begin{array}{c}
\left[ \underline{\psi}_{\ell_{1}}(i), ~e_{\ell_{1}}^{i m \mu}\underline{\phi}_{\ell_{1}} \right] \\
\vdots \\
\left[ \underline{\psi}_{\ell_{d}}(i), ~e_{\ell_{d}}^{i m \mu}\underline{\phi}_{\ell_{d}} \right]
\end{array} \right], ~~i\in \{1, \cdots, \beta\}.
\end{align}
Note that, $\Omega_{\mathcal{H}}(i),~i \in\{1, \cdots, \beta(d)\}$, is a $d \times d$ generalized Vandermonde matrix and hence is invertible. We also use the following notations for submatrices of the inverse of $\Omega_{\mathcal{H}}(i)$,
\begin{align}\label{eq_Omega_partition}
\left(\Omega_{\mathcal{H}}(i)\right)^{-1} = \left[ \begin{array}{c}
\Theta_{\mathcal{H}}(i) \\
\Xi_{\mathcal{H}}(i)
\end{array}\right],
\end{align}
where $\Theta_{\mathcal{H}}(i)$ represents the top $(d-\mu) \times d$ submatrix, and $\Xi_{\mathcal{H}}(i)$, the bottom $\mu \times d$ submatrix.

The decoding procedure for the repair of node $f$ is performed in $\beta(d)$ sequential steps. In the first step, the decoder only uses the first repair symbol received from each of the helpers, namely $r_{1}(\ell_{i},f)$, for $i\in \{1, \cdots, d\}$.

Using (\ref{eq_xh_repair_partition}), (\ref{eq_psi_repair_partition}), (\ref{eq_Upsilon_partition}), and (\ref{eq_Omega_i}), and the partitioning denoted in Fig. \ref{Fig_M_Repair_Partition}, the submatrix $\Upsilon_{\mathcal{H}}(1)$, introduced in (\ref{eq_Upsilon_partition}) can be written as
\begin{align}\label{eq_repair_step1}
\Upsilon_{\mathcal{H}}(1) = \Omega_{\mathcal{H}}(1) \left[\begin{array}{c c c c}
\block(4,4){\left[\begin{array}{c c c c}
\vspace{7pt} ~~~~ & ~~ & ~~ & ~~~~ \\
 ~~~ & ~~~ & M_{1}~ & ~~~~~~ \\
\vspace{7pt} ~~~~ & ~~ & ~~ & ~~~~ 
\end{array}\right]} \\
&&& \\
&&& \\
&&& \\
~O~ & ~\cdots ~ & ~O & S_{2m}
\end{array} \right] \left( \underline{\psi}_{f}(1)\right)^{\intercal}. 
\end{align}
Multiplying the inverse of $\Omega_{\mathcal{H}}(1)$ from right to the both sides of (\ref{eq_repair_step1}), and using (\ref{eq_Omega_partition}) the decoder derives
\begin{align}\label{eq_M1_repair}
M_{1} \left( \underline{\psi}_{f}(1) \right)^{\intercal} = \Theta_{\mathcal{H}}(1) \Upsilon_{\mathcal{H}}(1),
\end{align}
 and similarly, using (\ref{eq_phi_ell}),
\begin{align}\label{eq_S2m_repair}
S_{2m} \left( e_{f}^{(m-1)\mu} \underline{\phi}_{f} \right)^{\intercal} = \Xi_{\mathcal{H}}(1) \Upsilon_{\mathcal{H}}(1).
\end{align} 
Since both $M_{1}$, and $S_{2m}$ are symmetric, from (\ref{eq_M1_repair}) we have,
\begin{align}\label{eq_additiveterm1}
\underline{\psi}_{f}(1) M_{1} = \left(\Theta_{\mathcal{H}}(1) \Upsilon_{\mathcal{H}}(1)\right)^{\intercal},
\end{align}
and from (\ref{eq_S2m_repair}), by multiplying the scalar $e_{f}^{\mu}$, we get
\begin{align}\label{eq_additiveterm2}
e_{f}^{m\mu}\underline{\phi}_{f} S_{2m}  = e_{f}^{\mu} \left( \Xi_{\mathcal{H}}(1) \Upsilon_{\mathcal{H}}(1) \right)^{\intercal}.
\end{align} 
Using a partitioning similar to (\ref{eq_xh_repair_partition}) for $\underline{x}_{f}$, from (\ref{eq_additiveterm1}), and (\ref{eq_additiveterm2}) the decoder then recovers $\underline{x}_{f}(1)$ as,
\begin{align}
\underline{x}_{f}(1) = \underline{\psi}_{f}(1) M_{1} + \left[O, \cdots, O, e_{f}^{m\mu}\underline{\phi}_{f}S_{2m} \right]_{\mu \times m\mu}, \nonumber
\end{align}
where, the rightmost term in the above expression is derived by padding $m-1$, $\mu \times \mu$ zero matrices, $O$, to the left of the matrix calculated in (\ref{eq_additiveterm2}).

In step $i$ for $i=2$ through $\beta(d)$ of the repair decoding, the decoder then recovers $\underline{x}_{f}(i)$, using $\Upsilon_{\mathcal{H}}(i)$ received from the helpers, along with $e_{f}^{(i-1)m\mu} \underline{\phi}_{f} S_{2(i-1)m}$, recovered from the step $i-1$ of decoding. To this end, the decoder first removes the contribution of the $S_{2(i-1)m}$ submatrix in the repair symbols in $\Upsilon_{\mathcal{H}}(i)$ by calculating 
\begin{align}
\hat{\Upsilon}_{\mathcal{H}}(i) \hspace{-1mm} = \hspace{-1mm} \Upsilon_{\mathcal{H}}(i) \hspace{-1mm} - \hspace{-1mm} \left[\begin{array}{c}
e_{\ell_{1}}^{(i-1)m\mu -\mu} \underline{\phi}_{\ell_{1}} \\
\vdots \\
e_{\ell_{d}}^{(i-1)m\mu -\mu} \underline{\phi}_{\ell_{d}}
\end{array} \right] \hspace{-1mm} S_{2(i-1)m} \hspace{-1mm} \left(e_{f}^{(i-1)m\mu}\underline{\phi}_{f}\right)^{\intercal}. \nonumber
\end{align}
In the above expression, $S_{2(i-1)m} \left(e_{f}^{(i-1)m\mu}\underline{\phi}_{f}\right)^{\intercal}$ is itself derived by transposing $e_{f}^{(i-1)m\mu} \underline{\phi}_{f} S_{2(i-1)m}$. Hence we have,
\begin{align}
\hat{\Upsilon}_{\mathcal{H}}(i) = \Omega_{\mathcal{H}}(i) \left[\begin{array}{c c c c}
\block(4,4){\left[\begin{array}{c c c c}
\vspace{7pt} ~~~~ & ~~ & ~~ & ~~~~ \\
 ~~~ & ~~~ & M_{i}~ & ~~~~~~ \\
\vspace{7pt} ~~~~ & ~~ & ~~ & ~~~~ 
\end{array}\right]} \\
&&& \\
&&& \\
&&& \\
~O~ & ~\cdots ~ & ~O & S_{2im}
\end{array} \right] \left( \underline{\psi}_{f}(i)\right)^{\intercal}. \nonumber
\end{align}
Therefore, similar to (\ref{eq_M1_repair}) through (\ref{eq_additiveterm2}) the decoder derives,
\begin{align}\label{eq_Mi_repair}
\underline{\psi}_{f}(i)M_{i} = \left(\Theta_{\mathcal{H}}(i) \Upsilon_{\mathcal{H}}(i)\right)^{\intercal},
\end{align}
and
\begin{align}\label{eq_additiveterm2i}
e_{f}^{im\mu} \underline{\phi}_{f} S_{2im} = e_{f}^{\mu} \left( \Xi_{\mathcal{H}}(i) \Upsilon_{\mathcal{H}}(i)\right)^{\intercal}.
\end{align} 
Finally, using (\ref{eq_Mi_repair}) and (\ref{eq_additiveterm2i}), we have
\begin{align}
\underline{x}_{f}(i) = \underline{\psi}_{f}(i) M_{i} + \left[O, \cdots, O, e_{f}^{ima}\underline{\phi}_{f}S_{2im} \right]_{\mu \times m\mu}. \nonumber
\end{align} 
\end{proof}

\begin{rmrk}
In a DSS with $n$ nodes, for $D=\{d_{1},\cdots, d_{\delta}\}$, the bandwidth adaptive MSR codes presented in \cite{ExplicitMSR}, although support any rate, require 
\begin{align}\label{eq_Ye_Barg_alpha}
\alpha = \left(\mathrm{lcm}\left(d_{1}-k+1, \cdots, d_{\delta}-k+1 \right)\right)^n. 
\end{align}
Comparing (\ref{eq_Ye_Barg_alpha}) with (\ref{eq_alpha_design}), one could see that the presented scheme reduces the required $\alpha$ (and $\beta$) values to the $n^{\text{th}}$ root. However, this scheme works only for $2k-1<d_{i},~\forall{d_{i}\in D}$. Hence, the design of high-rate bandwidth adaptive MSR codes with small $\alpha$ and $\beta$ still remains an open problem.
\end{rmrk}


\section{Conclusion}

We presented an alternative solution for exact-repair MSR codes in which optimal exact repair is guaranteed simultaneously with a range of choices, $D=\{d_1,\cdots,d_\delta\}$, for the number of helpers. Comparing to the only other explicit constructions, presented in \cite{ExplicitMSR}, we showed that when $d_{i}\geq 2k-1,~\forall{d_{i} \in D}$, the required values for $\alpha$, and $\beta$ are reduced to the $n^{\text{th}}$ root for a DSS with $n$ nodes. 
\bibliographystyle{IEEEtran}
\bibliography{IEEEabrv,DSS_Bibliography}

\begin{thebibliography}{10}
\providecommand{\url}[1]{#1}
\csname url@samestyle\endcsname
\providecommand{\newblock}{\relax}
\providecommand{\bibinfo}[2]{#2}
\providecommand{\BIBentrySTDinterwordspacing}{\spaceskip=0pt\relax}
\providecommand{\BIBentryALTinterwordstretchfactor}{4}
\providecommand{\BIBentryALTinterwordspacing}{\spaceskip=\fontdimen2\font plus
\BIBentryALTinterwordstretchfactor\fontdimen3\font minus
  \fontdimen4\font\relax}
\providecommand{\BIBforeignlanguage}[2]{{%
\expandafter\ifx\csname l@#1\endcsname\relax
\typeout{** WARNING: IEEEtran.bst: No hyphenation pattern has been}%
\typeout{** loaded for the language `#1'. Using the pattern for}%
\typeout{** the default language instead.}%
\else
\language=\csname l@#1\endcsname
\fi
#2}}
\providecommand{\BIBdecl}{\relax}
\BIBdecl

\bibitem{Regenerating}
A.~G. Dimakis, P.~B. Godfrey, Y.~Wu, M.~J. Wainwright, and K.~Ramchandran,
  ``Network coding for distributed storage systems,'' \emph{{IEEE} Transactions
  on Information Theory}, vol.~56, no.~9, pp. 4539--4551, Sep. 2010.

\bibitem{BWAdapt_Kermarrec}
A.-M. Kermarrec, N.~L. Scouarnec, and G.~Straub, ``Repairing multiple failures
  with coordinated and adaptive regenerating codes,'' in \emph{Proc. {IEEE}
  International Symposium on Network Coding ({NetCod})}, Beijing, China, July
  2011, pp. 1--6.

\bibitem{Exact_AI_Asymptotic}
V.~Cadambe, S.~Jafar, H.~Maleki, K.~Ramchandran, and C.~Suh, ``Asymptotic
  interference alignment for optimal repair of {MDS} codes in distributed
  storage,'' \emph{{IEEE} Transactions on Information Theory}, vol.~59, no.~5,
  pp. 2974--2987, May 2013.

\bibitem{BWAdapt_Opportunistic}
V.~Aggarwal, C.~Tian, V.~A. Vaishampayan, and Y.-F.~R. Chen, ``Distributed data
  storage systems with opportunistic repair,'' in \emph{Proc. {IEEE}
  International Conference on Computer Communications ({INFOCOM})}, Toronto,
  Canada, April--May 2014, pp. 1833--1841.

\bibitem{ProgressiveEngagement}
M.~Hajiaghayi and H.~Jafarkhani, ``{MDS} codes with progressive engagement
  property for cloud storage systems,'' Available online:
  \url{https://arxiv.org/abs/1605.06927}, 2016.

\bibitem{ExplicitMSR}
M.~Ye and A.~Barg, ``Explicit constructions of high-rate {MDS} array codes with
  optimal repair bandwidth,'' \emph{{IEEE} Transactions on Information Theory},
  vol.~63, no.~4, pp. 2001--2014, April 2017.

\bibitem{BAER_ISIT16}
K.~Mahdaviani, A.~Khisti, and S.~Mohajer, ``Bandwidth adaptive \& error
  resilient regenerating codes with minimum repair bandwidth,'' in \emph{Proc.
  {IEEE} International Symposium on Information Theory ({ISIT})}, Barcelona,
  Spain, July 2016, pp. 235--239.

\bibitem{BAER}
------, ``Bandwidth adaptive \& error resilient {MBR} exact repair regenerating
  codes,'' Available online: \url{https://arxiv.org/pdf/1711.02770.pdf}, 2017.

\bibitem{PM_Codes}
K.~V. Rashmi, N.~B. Shah, and P.~V. Kumar, ``Optimal exact regenerating codes
  for distributed storage at the {MSR} and {MBR} points via a product-matrix
  construction,'' \emph{{IEEE} Transactions on Information Theory}, vol.~57,
  no.~8, pp. 5227--5239, August 2011.

\bibitem{ExplicitMSR_nearOptimal}
M.~Ye and A.~Barg, ``Explicit constructions of optimal-access {MDS} codes with
  nearly optimal sub-packetization,'' \emph{{IEEE} Transactions on Information
  Theory}, vol.~63, no.~10, pp. 6307--6317, Oct. 2017.

\bibitem{Coupled_Layer}
B.~Sasidharan, M.~Vajha, and P.~V. Kumar, ``An explicit, coupled-layer
  construction of a high-rate {MSR} code with low sub-packetization level,
  small field size and $ d<(n-1) $,'' in \emph{Proc. {IEEE} International
  Symposium on Information Theory ({ISIT})}, Aachen, Germany, June 2017, pp.
  2048--2052.

\bibitem{BAMSR}
K.~Mahdaviani, S.~Mohajer, and A.~Khisti, ``Product matrix {MSR} codes with
  bandwidth adaptive exact repair,'' Available online:
  \url{https://arxiv.org/abs/1708.03402}, 2017.

\end{thebibliography}

\end{document}